\documentclass[
  aps,
  pra,
  reprint,
  amsmath,amssymb,
]{revtex4-2}

\usepackage{amsthm}
\usepackage{graphicx}
\usepackage{dcolumn}
\usepackage{bm}

\usepackage{tikz}
\usetikzlibrary{quantikz2}

\begin{document}

\title{Weak Values Beyond the Weak Limit: \\Measurement-Strength Invariance Under Post-Selection}

\author{Cosmin Andrei}
\author{Vlatko Vedral}
\affiliation{Clarendon Laboratory, University of Oxford,
Parks Road, Oxford OX1 3PU, United Kingdom}

\date{\today}

\begin{abstract}
Weak values are conventionally extracted in the limit of vanishing measurement strength. At finite strength, the corresponding post-selected conditional averages generally acquire nonlinear corrections and need not retain their weak value interpretation. Here we identify a class of measurement protocols for which this expectation fails: the conditional value reconstructed from the measuring probe is exactly independent of the probe strength and therefore coincides with the real part of the weak value. We first analyze this phenomenon in the measurement--disturbance protocol introduced by Lund and Wiseman. We then establish conditions under which this invariance persists, showing that it is determined by the interplay between the probe interaction, a subsequent measurement interaction with an ancillary apparatus, and the final post-selection. These results demonstrate that weak coupling, although generally sufficient for accessing weak values, is not always necessary.
\end{abstract}

\maketitle

\section{Introduction}

Since their introduction by Aharonov, Albert and Vaidman in 1988 \cite{Aharonov1988}, weak values have become a versatile tool in quantum measurement theory and quantum foundations. Originally proposed as the outcome of weak measurements performed on pre- and post-selected quantum ensembles, they have since found applications ranging from weak-value amplification and quantum metrology to quantum state tomography and direct state reconstruction \cite{Dressel2014}. More recently, weak values have attracted considerable interest as operationally accessible conditional quasi-probabilities, which relax some of the constraints of ordinary probability distributions~\cite{ArvidssonShukur2024}, providing an alternative statistical description of quantum systems.

The connection between weak values and conditional quasi-probability distributions follows naturally from the weak values of projection operators. Given a complete set of orthogonal projectors $\{\Pi_j\}$ associated with an observable $A=\sum_j a_j\Pi(a_j)$, the weak value of each projector represents a conditional quasi-probability. These quantities satisfy normalization and allow expectation values to be reconstructed through conditional averaging, closely resembling ordinary conditional probability distributions. Unlike classical probabilities, however, quasi-probabilities are not constrained to lie within the interval $[0,1]$, allowing them to assume negative values or values exceeding unity. Thus, weak values are placed within the broader family of quantum quasi-probability distributions, alongside the Kirkwood--Dirac distribution and related phase-space representations \cite{Wiseman2002, ArvidssonShukur2024}.

Post-selection plays a central role in these nonclassical features. The weak value of an observable,
\begin{equation} 
A_w=\frac{\langle\phi|A|\psi\rangle}{\langle\phi|\psi\rangle},
\label{eq:weak_value}
\end{equation}

can become arbitrarily large whenever the overlap between the pre-selected state $|\psi\rangle$ and the post-selected state $|\phi\rangle$ becomes sufficiently small, provided that $\langle\phi|\psi\rangle \neq 0$. Such anomalous weak values may therefore lie well outside the eigenvalue spectrum of the observable. These anomalous values have attracted considerable attention also because they have been linked to genuinely quantum phenomena such as contextuality and quantum nonclassicality \cite{Pusey2014, dressel2015weak}. Consequently, weak values have become an important theoretical and experimental tool for probing foundational aspects of quantum mechanics.

This quasi-probabilistic interpretation has motivated numerous experimental applications. Weak values have been employed to reconstruct average trajectories in Bohmian mechanics \cite{Wiseman2002,Kocsis2011,Mahler2016,Flack2017,Coffey2011}, to investigate measurement-induced disturbance through Heisenberg measurement-disturbance relations \cite{Lund2010,Rozema2012, ozawa2003universally}, and more recently to explore quantum nonlocal correlations in Bell-type scenarios~\cite{cohen2026quantum}. At the same time, the physical and ontological interpretations of these experiments remain the subject of active debate within the quantum foundations community \cite{Vaidman2017,Barandes2026}. The present work does not address these interpretational questions directly; rather, it focuses on the operational assumptions underlying the extraction of weak values.

Weak values are conventionally reconstructed experimentally through weak measurements combined with a suitable post-selection. The interaction between the quantum system, hereafter referred to simply as the system, and the measuring probe is assumed to be sufficiently weak that the back-action on the quantum state is negligible to leading order. Under this approximation, the conditional shift of the probe's pointer is proportional to the real part of the weak value, providing an experimentally accessible quantity while only minimally disturbing the system. Consequently, the operational reconstruction of weak values is generally regarded as fundamentally tied to the weak-coupling limit, with finite-strength measurements expected to introduce nonlinear corrections to the conditional shift. 

There are, however, simple cases in which this expectation does not hold. Consider a post-selected state that is an eigenstate of the observable whose weak value is being evaluated. In this case, the weak value reduces to the corresponding eigenvalue,
\begin{equation}
  \frac{\langle a|A|\psi \rangle}{\langle a| \psi \rangle} = a.  
  \label{eq:eigenstate_weak_value}
\end{equation}
 As we show below, this elementary case also has an exact finite-strength counterpart: when the post-selection is performed directly onto an eigenstate of the probed observable, the conditional pointer displacement is independent of the probe strength. This provides the simplest example of the measurement-strength invariance studied in this work.

The Lund--Wiseman measurement--disturbance protocol~\cite{Lund2010} exhibits a less immediate form of the same phenomenon. In this protocol, an initial probing is followed by a second measurement apparatus interaction before the final post-selection is performed. Remarkably, in the two-dimensional Lund--Wiseman model, the quantities reconstructed from the probe remain independent of its measurement strength despite the second interaction. The invariance therefore cannot be explained by the elementary eigenstate post-selection argument alone. We investigate its origin by generalizing the protocol to continuous-variable pointers, thereby separating the effect from features specific to qubit meters. We find that the invariance is governed by the combined structure of the probe interaction, the subsequent apparatus interaction, the final post-selection, and the system size.

\section{Measurement-Strength Invariance in the Lund--Wiseman Protocol}

\subsection{Weak-Valued Conditional Probability}
Let
$A=\sum_j a_j\Pi(a_j)$ be an observable with spectral projectors $\Pi(a_j)$. 
Given an initial state $|\psi\rangle$ and a post-selected state $|\phi\rangle$, the
real part of the weak value of $A$ is
\begin{equation}
    {}_{\phi}\langle A\rangle^{\rm weak}_{\psi}
    =
    {\rm Re}
    \left(
    \frac{\langle\phi|A|\psi\rangle}
    {\langle\phi|\psi\rangle}
    \right).
    \label{eq:weak_value_simple}
\end{equation}
Throughout this work, we restrict attention to the real part of the weak value,
which is directly associated with the conditioned pointer displacement. For a general complex initial pointer wavefunction, the same
displacement may also contain a contribution proportional to the imaginary
part of the weak value. We exclude this case by restricting the initial
pointer wavefunctions to be real~\cite{dressel2012significance}. Accordingly, unless
stated otherwise, all references to weak values in what follows refer to their
real parts. 

Using the spectral decomposition of $A$, we note

\begin{equation}
    {}_{\phi}\langle A\rangle^{\rm weak}_{\psi}
    =
    \sum_j a_j\,
    {}_{\phi}\langle \Pi(a_j)\rangle^{\rm weak}_{\psi}.
    \label{eq:weak_value_projector_decomposition}
\end{equation}

This motivates the identification
\begin{equation}
    P^{WV}(a_j|\phi)
    \equiv
    {}_{\phi}\langle \Pi(a_j)\rangle^{\rm weak}_{\psi},
    \label{eq:weak_projector_probability}
\end{equation}

where $P^{WV}(a_j|\phi)$ is called the \textit{weak-valued conditional
probability}. These quantities resemble ordinary conditional probabilities in
that they normalize over a complete set of alternatives,
\begin{equation}
    \sum_j P^{WV}(a_j|\phi)=1.
\end{equation}

However, unlike classical probabilities, weak-valued probabilities need not lie
between $0$ and $1$. They may be negative or larger than unity, which is why
they are more properly understood as conditional quasi-probabilities
\cite{Wiseman2002,Dressel2014}. The notion of a
weak-valued probability distribution is meaningful because the weak value of
$A$ decomposes into the weak values of the projectors $\Pi(a_j)$, as in
Eq.~\eqref{eq:weak_value_projector_decomposition}.

Operationally, weak values are extracted from the statistics of a weak
measurement followed by post-selection. To see this, consider a standard von
Neumann interaction between the system and a measuring probe,
\begin{equation}
    H_{\rm int}=g(t)\,A\otimes p,
\end{equation}

where $g(t)$ is the interaction strength and $p$ is the momentum conjugate to the probe's
pointer position $x$. Neglecting the free evolution during the interaction and defining the
integrated coupling strength $g \equiv \int dt\, g(t)$,
the overall unitary becomes $
U_g = e^{-ig A\otimes p}$. Throughout this paper, we will take $\hbar=1$ and always assume a nonzero probe coupling strength. Assume the composite system is initially in state $|\Psi_{in}\rangle = |\psi\rangle \otimes |\gamma\rangle$, with $|\psi\rangle$ and $|\gamma\rangle$ corresponding to the system and probe respectively. For the real initial probe considered here, if the probe pointer is centered at the origin, then
the expected value of $A$ relates to the average pointer position via $\langle A \rangle =\frac{1}{g}  \int dx\, x\,P_g(x)$ where $P_g(x) = ||(I \otimes \langle x|) U_g |\Psi_{in}\rangle|| ^2 $ is the distribution of pointer values. An important property of the previous expression is that $\langle A \rangle$ is independent of the interaction strength $g$. 

Conditioning on a successful post-selection $|\phi\rangle$, one may define
the conditional expectation value

\begin{equation}
    {}_{\phi}\langle A\rangle_{\psi}^{(g)}
    =
    \frac{1}{g}
    \int dx\, x\,P_g(x|\phi),
    \label{eq:finite_strength_conditional_average}
\end{equation}

where $P_g(x|\phi) = \frac{P_g(x, \phi)}{P_g(\phi)}$ is the post-selected pointer distribution. In the weak-measurement limit, the disturbance caused by the
probe becomes negligible, and this conditional expectation value reduces to the real part of the weak
value,
\begin{equation}
    \lim_{g\rightarrow 0}
    {}_{\phi}\langle A\rangle_{\psi}^{(g)}
    =
    {}_{\phi}\langle A\rangle^{\rm weak}_{\psi}.
    \label{eq:weak_limit_pointer_average}
\end{equation}

Unlike the expected value $\langle A \rangle$, the finite-strength conditional average ${}_{\phi}\langle A\rangle_{\psi}^{(g)}$ is generally not invariant under the interaction strength $g$. In particular, ${}_{\phi}\langle A\rangle_{\psi}^{(g)}$ need not decompose linearly into projector contributions as in Eq.~\eqref{eq:weak_value_projector_decomposition},
so the finite-strength conditional average cannot in general be identified with the standard weak-valued probability distribution. In the limit \(g\to0\), however, the conditional average reduces to the weak value, and the corresponding weak-valued probability interpretation is recovered.

\subsection{Measurement-Disturbance Relationship}
An important application of weak-valued probability distributions was introduced by Lund and Wiseman~\cite{Lund2010}, who showed that they provide an operational means of reconstructing Ozawa's measurement precision and disturbance~\cite{ozawa2003universally}. Because our investigation is motivated by this protocol, we briefly summarize its main ingredients before presenting our finite-strength analysis.

In an indirect measurement model, a system in the initial mixed state $\rho$
interacts with a measurement apparatus prepared in the state $\theta$.
The system and apparatus interact through the unitary $U_A^{\rm app}$,
which correlates the system observable $A$ with the apparatus readout
$M_{\rm app}$. Simultaneously, another system observable $B$ may be disturbed by the
same interaction. Ozawa defines the measurement precision and disturbance as
\begin{equation}
    \epsilon(A)=
    \left\langle
    \left[
    {U_A^{\text{app}}}^\dagger(I\otimes M_\text{app})U_A^{\text{app}}-A\otimes I
    \right]^2
    \right\rangle^{1/2},
    \label{eq:ozawa_precision}
\end{equation}

and
\begin{equation}
    \eta(B)=
    \left\langle
    \left[
    {U_A^{\text{app}}}^\dagger(B\otimes I)U_A^{\text{app}}-B\otimes I
    \right]^2
    \right\rangle^{1/2},
    \label{eq:ozawa_disturbance}
\end{equation}

where the expectation value is taken over $\rho\otimes\theta$. These quantities
appear in the universally valid measurement--disturbance relation,
\begin{equation}
    \epsilon(A)\eta(B)+\epsilon(A)\sigma(B)+\sigma(A)\eta(B)
    \geq
    \frac{1}{2}|\langle[A,B]\rangle|.
    \label{eq:Ozawa}
\end{equation}

\begin{figure}[t!]
    \centering
    \begin{quantikz}[row sep=0.9cm, column sep=0.45cm]
        \lstick{$\rho$} & \gate[2]{U^{\text{probe}}_{A/B}} \gategroup[2,steps=2,style={dashed,rounded corners,fill=blue!20, inner xsep=2pt},background,label style={label position=below,anchor=north,yshift=-0.2cm}]{{ Probe}} & \qw & \qw & \qw & & \gate[2]{U_A^{\text{app}}} \gategroup[2,steps=2,style={dashed,rounded corners,fill=red!20, inner xsep=2pt},background,label style={label position=below,anchor=north,yshift=-0.2cm}]{{Measurement Apparatus}} & &\meter{B}\\
        \lstick{$\gamma$}  & \targ{}  & \meter{M_{\text{probe}}}&\wireoverride{n}& \wireoverride{n} &\wireoverride{n} \lstick{$\theta$} & \targ{}& \meter{M_\text{app}} 
    \end{quantikz}
    \caption{
Generalized schematic of the Lund--Wiseman measurement--disturbance protocol~\cite{Lund2010}. The system first interacts with a weakly-coupled probe through the unitary $U^{\mathrm{probe}}_{A/B}$, which performs a weak measurement of either the observable $A$ (precision) or $B$ (disturbance). The system subsequently interacts with the measurement apparatus through the unitary $U_A^{\text{app}}$. For the precision, the final readout is performed on the apparatus meter $M_{\rm app}$, whereas for the disturbance the final readout is performed directly on the system in the $B$ basis. Adapted from Ref.~\cite{Lund2010}.}
    \label{fig:LundProtocol}
\end{figure}

Here $\sigma(A)$ and $\sigma(B)$ denote the standard deviations of $A$ and $B$
in the initial system state. For the present paper, however,
Eq.~\eqref{eq:Ozawa} is only motivational. We will not study the validity of
the measurement--disturbance relation itself as this was done elsewhere~\cite{Rozema2012, baek2013experimental, erhart2012experimental}. Instead, we focus on the
weak-valued probability distributions used in the Lund--Wiseman reconstruction.

The key observation of Lund and Wiseman is that the quantities in
Eqs.~\eqref{eq:ozawa_precision} and~\eqref{eq:ozawa_disturbance} can be written
as root-mean-square differences over weak-valued joint distributions. For the precision, one weakly probes the initial value $a_i$ of $A$ before the apparatus
interaction and post-selects on the final apparatus readout $a_f$. For the 
disturbance, one weakly probes the initial value $b_i$ of $B$ before the apparatus
interaction and post-selects on the final system readout $b_f$. The operational procedure is illustrated in Fig.~\ref{fig:LundProtocol}.

In both the precision and disturbance experiments, a weakly-coupled probe is inserted
immediately before the measurement apparatus in order to extract information
about the pre-interaction value of the relevant observable through the unitaries $U_A^{\text{probe}}$ or $U_B^{\text{probe}}$. The system then interacts with the measurement
apparatus through the unitary $U_A^{\text{app}}$. For the precision measurement, the final
readout is performed on the apparatus meter, whereas for the disturbance
measurement the final readout is performed directly on the system. By combining
the probe outcomes with the appropriate post-selection, Lund and Wiseman
showed that Ozawa's precision and disturbance can be reconstructed from
weak-valued probability distributions via

\begin{equation}
    \epsilon(A)^2
    =
    \sum_{a_i,a_f}
    (a_i-a_f)^2 P^{WV}(a_i,a_f),
    \label{eq:epsilon_joint}
\end{equation}

and

\begin{equation}
    \eta(B)^2
    =
    \sum_{b_i,b_f}
    (b_i-b_f)^2 P^{WV}(b_i,b_f),
    \label{eq:eta_joint}
\end{equation}

where $P^{WV}(a_i,a_f) = P^{WV}(a_i|a_f) P(a_f)$ is defined as the weak-valued joint probability distribution. 

In the present paper, Eqs.~\eqref{eq:epsilon_joint} and~\eqref{eq:eta_joint}
are important not because we wish to study the numerical values of
$\epsilon(A)$ and $\eta(B)$, but because they identify the
weak-valued probabilities as the operational quantities from which precision
and disturbance are reconstructed. Our central question is whether these
weak-valued probabilities genuinely require a weakly-coupled probe, or whether they can
sometimes be obtained exactly using a finite-strength probing interaction.

In the Lund--Wiseman setup the post-selection is not performed immediately after the probe. Instead, the system interacts with the measurement apparatus through the unitary $U_A^{\text{app}}$, and the post-selection is performed only afterwards. One must therefore use the generalized weak-value formula for a mixed initial state, an intermediate unitary evolution, and a final post-selection~\cite{Wiseman2002, Lund2010}. For the disturbance of $B$, the weak-valued probability that the initial value was $b_i$, conditioned on the final value $b_f$, is
\begin{equation}
\begin{aligned}
&P^{WV}(b_i|b_f)
= {}_{b_f}\langle \Pi(b_i)\rangle^{\rm weak}_{\rho}{} = \\
&\operatorname{Re}\Bigg[
\frac{
\operatorname{Tr}\!\Big[
(\Pi(b_f)\otimes I)\,
U_A^{\text{app}}
(\Pi(b_i)\otimes I)
(\rho\otimes\theta)
{U_A^{\text{app}}}^\dagger
\Big]}{
\operatorname{Tr}\!\Big[
(\Pi(b_f)\otimes I)\,
U_A^{\text{app}}
(\rho\otimes\theta)
{U_A^{\text{app}}}^\dagger
\Big]
}
\Bigg].
\end{aligned}
\label{eq:Lund_disturbance_weak_value}
\end{equation}

Similarly, for the precision of $A$, the weak-valued probability that the initial system value was $a_i$, conditioned on the final apparatus readout $a_f$, is
\begin{equation}
\begin{aligned}
&P^{WV}(a_i|a_f)
= {}_{a_f}\langle \Pi(a_i)\rangle^{\rm weak}_{\rho}{} = \\
&{\rm Re}\!\Bigg[
\frac{
{\rm Tr}\!\left[
(I\otimes\Pi(a_f))\,
U_A^{\text{app}}\,
(\Pi(a_i)\otimes I)\,
(\rho\otimes\theta)\,
{U_A^{\text{app}}}^\dagger
\right]
}{
{\rm Tr}\!\left[
(I\otimes\Pi(a_f))\,
U_A^{\text{app}}\,
(\rho\otimes\theta)\,
{U_A^{\text{app}}}^\dagger
\right]
}
\Bigg].
\end{aligned}
\label{eq:Lund_precision_weak_value}
\end{equation}

\begin{figure}[t!] 
\centering 
\begin{quantikz}[row sep=0.9cm, column sep=0.45cm] 
\lstick{$|\psi\rangle$} & \ctrl{1} \gategroup[2,steps=2,style={dashed,rounded corners,fill=blue!20, inner xsep=2pt},background,label style={label position=below,anchor=north,yshift=-0.2cm}]{{Probe}} & \qw & \qw & \qw & & \ctrl{1} \gategroup[2,steps=2,style={dashed,rounded corners,fill=red!20, inner xsep=2pt},background,label style={label position=below,anchor=north,yshift=-0.2cm}]{{Measurement Apparatus}} & &\meter[style={ draw=gray!50, text=gray!60, fill=gray!5}, label style={text=gray!60}]{X_s}\\ 
\lstick{$|\gamma\rangle$} & \targ{} & \meter{Z_p} &\wireoverride{n}& \wireoverride{n} &\wireoverride{n} \lstick{\ket{\theta}} & \targ{}& \meter{Z_m} \end{quantikz} 
 \caption{Three-qubit implementation of the Lund--Wiseman precision protocol~\cite{Lund2010}. The system first interacts with the probe (P) through a CNOT gate, implementing a variable-strength probe of $Z$, and subsequently interacts with the measurement apparatus (MA) through a second CNOT gate. The probe and measurement apparatus are read out in the $Z$ basis, yielding outcomes $Z_p$ and $Z_m$, respectively. Figure adapted from Ref.~\cite{Lund2010}.}
   \label{fig:LWqubit_precision}

\end{figure}

\subsection{Qubit Example}

We now specialize to the qubit example proposed in Ref.~\cite{Lund2010}, with $A=Z$ and $B=X$. We first focus on the precision protocol. The probe and measurement apparatus are both read out in the $Z$ basis, with outcomes denoted by $Z_p=\pm1$ and $Z_m=\pm1$, respectively. Throughout the following explicit calculations, we adopt the tensor-product ordering $\mathcal{H}_S \otimes \mathcal{H}_{\rm P} \otimes \mathcal{H}_{\rm MA}$
corresponding to the system, probe, and measurement apparatus, respectively. When only a two-factor tensor product is written, the second Hilbert space refers to either the probe or the measurement apparatus, as clear from context.

The system is initially prepared in the state
$|\psi\rangle=\alpha|0\rangle+\beta|1\rangle$, the probe in the state
$|\gamma\rangle=\gamma|0\rangle+\bar{\gamma}|1\rangle$, and the measurement apparatus in the state
$|\theta\rangle=\cos\theta\,|0\rangle+\sin\theta\,|1\rangle$, where
$\gamma,\bar{\gamma}\in\mathbb{R}^{+}$ satisfy
$\gamma^{2}+\bar{\gamma}^{2}=1$. The quantum circuit is shown in Fig.~\ref{fig:LWqubit_precision}.

The probe and the measurement apparatus each interact with the system in the $Z$ basis through CNOT gates, described by the unitaries $U^{\rm probe}_{Z}$ and
$U^{\rm app}_{Z}$, respectively. The probe strength is quantified by
$s_{\gamma}=2|\gamma|^{2}-1$, while the measurement apparatus strength is quantified by
$s_{\theta}=\cos(2\theta)$. In the standard weak-value formalism, the probe is assumed to operate in the weak-coupling regime, $|s_{\gamma}|\ll1$, whereas the apparatus strength $s_{\theta}$ is allowed to vary continuously from a projective measurement ($\theta=0$) to the no-measurement limit ($\theta=\pi/4$).

Reading out the probe in the $Z$ basis realizes the two-outcome POVM

\begin{equation}
    E_{\pm}
    =
    \frac{1}{2}
    \left(
    I\pm   s_\gamma Z
    \right).
    \label{eq:probe_povm_Z}
\end{equation}

Hence, the expectation value of $Z$ inferred from the probe statistics is

\begin{equation}
    \langle Z\rangle
    =
    \frac{
    P(Z_p=+1)-P(Z_p=-1)
    }{
    s_\gamma
    }.
    \label{eq:probe_average_Z}
\end{equation}

Conditioning on a final apparatus outcome $Z_m$, the conditional expectation value is

\begin{equation}
    {}_{Z_m}\langle Z\rangle^{(s_\gamma)}_{\psi}
    =
    \frac{
    P(Z_p=+1|Z_m)-P(Z_p=-1|Z_m)
    }{
      s_\gamma
    }.
    \label{eq:conditional_probe_average_Z}
\end{equation}

The standard interpretation of Eq.~\eqref{eq:conditional_probe_average_Z} is that it coincides with the weak value only in the limit $s_\gamma\rightarrow0$. 
Since $Z$ has only two eigenvalues, the corresponding weak-valued probabilities are reconstructed through

\begin{equation}
    2P^{WV}(Z_i=\pm1|Z_m)
    =
    1
    \pm
    {}_{Z_m}\langle Z\rangle^{\rm weak}_{\psi}.
    \label{eq:weak_prob_from_Z}
\end{equation}

At finite strength, one would normally expect corrections because the probe is no longer negligibly disturbing. We now compute the relevant conditional probabilities exactly, without taking the weak limit.

After the probe and apparatus interactions, the joint and marginal probabilities for obtaining probe outcome $Z_p$ and apparatus outcome $Z_m$ are
\begin{widetext}
\begin{equation}
\begin{aligned}
    P(Z_p,Z_m)
    = 
   \left\|
    I\otimes \langle Z_p|\otimes \langle Z_m|
    \left(
    U^{\rm app}_{\rm Z}
    U^{\rm probe}_{\rm Z}
    \right)
    |\psi\rangle\otimes|\gamma\rangle\otimes|\theta\rangle
    \right\|^2
    \label{eq:joint_probability_exact}
    \end{aligned}
\end{equation}
\begin{equation}
\begin{aligned}
    P(Z_m)
    = 
    \left\|
    I\otimes I\otimes \langle Z_m|
    \left(
    U^{\rm app}_{\rm Z}
    U^{\rm probe}_{\rm Z}
    \right)
    |\psi\rangle\otimes|\gamma\rangle\otimes|\theta\rangle
    \right\|^2.
    \label{eq:marginal_probability_exact}
\end{aligned}
\end{equation}
\end{widetext}

The conditional probability appearing in Eq.~\eqref{eq:conditional_probe_average_Z} is then $P(Z_p|Z_m)=\frac{P(Z_p,Z_m)}{P(Z_m)}$. Evaluating these probabilities explicitly gives, for the post-selection $Z_m=+1$,

\begin{equation}
    P(Z_p=+1,Z_m=+1)
    =
    |\alpha|^2|\gamma|^2\cos^2\theta
    +
    |\beta|^2|\bar{\gamma}|^2\sin^2\theta,
    \label{eq:P_plus_plus}
\end{equation}

and

\begin{equation}
    P(Z_p=-1,Z_m=+1)
    =
    |\alpha|^2|\bar{\gamma}|^2\cos^2\theta
    +
    |\beta|^2|\gamma|^2\sin^2\theta.
    \label{eq:P_minus_plus}
\end{equation}

The marginal probability is therefore

\begin{equation}
    P(Z_m=+1)
    =
    |\alpha|^2\cos^2\theta
    +
    |\beta|^2\sin^2\theta.
    \label{eq:P_Zm_plus}
\end{equation}

Substituting Eqs.~\eqref{eq:P_plus_plus}--~\eqref{eq:P_Zm_plus} into Eq.~\eqref{eq:conditional_probe_average_Z}, we find

\begin{equation}
    {}_{Z_m=+1}\langle Z\rangle ^{(s_\gamma)}_{\psi}
    =
    \frac{
    |\alpha|^2\cos^2\theta
    -
    |\beta|^2\sin^2\theta
    }{
    |\alpha|^2\cos^2\theta
    +
    |\beta|^2\sin^2\theta
    }.
    \label{eq:finite_strength_Z_value}
\end{equation}

Similarly, for the post-selection $Z_m=-1$ one obtains

\begin{equation}
    {}_{Z_m=-1}\langle Z\rangle^{(s_\gamma)}_{\psi}
    =
    \frac{
    |\alpha|^2\sin^2\theta
    -
    |\beta|^2\cos^2\theta
    }{
    |\alpha|^2\sin^2\theta
    +
    |\beta|^2\cos^2\theta
    }.
    \label{eq:finite_strength_Z_value_minus}
\end{equation}

Eqs.~\eqref{eq:finite_strength_Z_value} and~\eqref{eq:finite_strength_Z_value_minus} explicitly demonstrate measurement-strength invariance in the precision protocol. In both cases, the dependence on the probe strength $s_\gamma$
cancels exactly. The resulting finite-strength conditional values therefore coincide with their weak-limit values,
\begin{equation}
\begin{aligned}
    {}_{+1}\langle Z\rangle ^{(s_\gamma)}_{\psi} &= {}_{+1}\langle Z\rangle ^{\rm weak}_{\psi} \\ 
    {}_{-1}\langle Z\rangle ^{(s_\gamma)}_{\psi} &= {}_{-1}\langle Z\rangle ^{\rm weak}_{\psi}.
\end{aligned}
\end{equation}
 
Consequently, the weak-valued probabilities reconstructed through Eq.~\eqref{eq:weak_prob_from_Z} are recovered exactly at finite probe strength and are independent of the probe strength:

\begin{equation}
    P^{WV}(Z_i=\pm1|Z_m)
    =
    \frac{1}{2}
    \left(
    1
    \pm
    {}_{Z_m}\langle Z\rangle^{(s_\gamma)}_{\psi}
    \right).
    \label{eq:invariance}
\end{equation}

Therefore, in the qubit precision protocol, the weak-valued probability distribution can be reconstructed exactly from finite-strength probe statistics, without taking the weak-measurement limit. The weak-coupling limit is not necessary for reconstructing these quantities.

As previously discussed, the expectation value $\langle Z \rangle$ can be recovered from the probe statistics at any strength. The nontrivial fact is that the post-selected conditional quantity ${}_{Z_m}\langle Z\rangle^{(s_\gamma)}_{\psi}$, which would normally be expected to acquire nonlinear finite-strength corrections as suggested by Eq.~\eqref{eq:finite_strength_conditional_average}, also remains strength invariant. Since the weak-valued probability distribution is built from this conditional value, it inherits the same invariance.  If the finite-strength conditional value agrees exactly with the weak value, then the linear weak-value decomposition of Eq.~\eqref{eq:weak_value_projector_decomposition} is preserved. In this sense, the more fundamental result is the equality between the finite-strength conditional value and the weak value; the strength invariance of the reconstructed weak-valued probability distribution is then a consequence.

The corresponding analysis applies to the disturbance protocol of Lund and Wiseman, in which the probe interaction is arranged to measure the system observable $X$. The probe is subsequently read out in the $Z$ basis, with outcomes $Z_p=\pm1$, from which the initial $X$ values of the system are inferred. Following the measurement-apparatus interaction, the system is post-selected in the $X$ basis, with outcome $X_s=\pm1$, as in Fig.~\ref{fig:LWqubit_disturbance}.

\begin{equation}
{}_{X_s}\langle X\rangle^{(s_{\gamma})}_{\psi}
=
\frac{
P(Z_p=+1|X_s)-P(Z_p=-1|X_s)
}{
s_{\gamma}
},
\label{eq:disturbance_weak_value_qubit}
\end{equation}
and 
\begin{equation}
2P^{WV}(X_i=\pm1|X_s)
=
1 \pm {}_{X_s}\langle X\rangle^{\rm weak}_{\psi}.
\label{eq:disturbance_weak_prob}
\end{equation}

\begin{figure}[t!]
    \centering
    \begin{quantikz}[row sep=0.9cm, column sep=0.45cm]
        \lstick{$|\psi\rangle$} &\gate{H} \gategroup[2,steps=3,style={dashed,rounded corners,fill=blue!20, inner xsep=2pt},background,label style={label position=below,anchor=north,yshift=-0.2cm}]{{Probe}}& \ctrl{1}  & \gate{H} \qw & \qw & \qw & & \ctrl{1} \gategroup[2,steps=2,style={dashed,rounded corners,fill=red!20, inner xsep=2pt},background,label style={label position=below,anchor=north,yshift=-0.2cm}]{{Measurement Apparatus}} & &\meter{X_s}\\
        \lstick{$|\gamma\rangle$} & \qw  & \targ{}  & \meter{Z_p} &\wireoverride{n}& \wireoverride{n} &\wireoverride{n} \lstick{\ket{\theta}} & \targ{}& \meter[style={ draw=gray!50, text=gray!60, fill=gray!5}, label style={text=gray!60}]{Z_m} 
    \end{quantikz}
    \caption{Three-qubit implementation of the Lund--Wiseman disturbance protocol~\cite{Lund2010}. The system first interacts with the probe (P), with Hadamard gates before and after the CNOT implementing a variable-strength probe of $X$. The system subsequently interacts with the measurement apparatus (MA) through a CNOT gate. The probe is read out in the $Z$ basis, yielding $Z_p$, while the final system state is measured in the $X$ basis, yielding $X_s$. Figure adapted from Ref.~\cite{Lund2010}.}
    \label{fig:LWqubit_disturbance}
\end{figure}

 The explicit derivation of the finite-strength conditionals given in the Appendix~\ref{app:finite_strength_disturbance} shows that the extracted quantities ${}_{X_s}\langle X\rangle^{(s_{\gamma})}_{\psi}$ are again independent of the probe strength and coincide exactly with the corresponding weak values ${}_{X_s}\langle X\rangle^{\rm weak}_{\psi}$. Hence, the weak-valued probabilities entering the reconstruction of $\eta(X)$ are invariant as well. Surprisingly, the invariance occurs despite the probe and measurement apparatus coupling through different observables, $X$ and $Z$, respectively.

 The independence of the reconstructed precision and disturbance from the probe strength was noted numerically by Lund and Wiseman in their analysis of imperfect CNOT gates~\cite{Lund2010}, and experimentally by Rozema et al.~\cite{Rozema2012}. Our analysis shows explicitly that, in the ideal qubit model, the cancellation occurs exactly at the level of the finite-strength conditional value.

\section{Conditions for Probe-Strength Invariance}

The simplest origin of measurement-strength invariance can already be seen
from the eigenstate post-selection considered in Eq.~\eqref{eq:eigenstate_weak_value}. Suppose that the
system couples only to the probe through the von Neumann interaction and is subsequently post-selected onto an eigenstate
$|a_f\rangle$ of $A$, with
$A|a_f\rangle=a_f|a_f\rangle$.

 If the probe pointer is
initially centered at the origin, the finite-strength conditional average
defined in Eq.~\eqref{eq:finite_strength_conditional_average} satisfies
\begin{equation}
{}_{a_f}\langle A\rangle_{\psi}^{(g)}
=
\frac{1}{g}
\int
dx\,x\,P_g(x|a_f)
=
a_f.
\label{eq:eigenstate_strength_invariance}
\end{equation}
The equality in Eq.~\eqref{eq:eigenstate_weak_value} therefore has a direct operational counterpart:
post-selection onto an eigenstate of the measured observable makes the
conditional probe pointer displacement exactly independent of the interaction
strength. In this sense, the eigenstates of $A$ form a special class of
post-selected states.

Normally, this invariance is not expected to survive under an arbitrary evolution
between the probe interaction and the final post-selection. To see this,
consider an arbitrary state $|\phi\rangle$. For any chosen post-selected
eigenstate $|a_f\rangle$, one may introduce a unitary $V$ such that
$V|\phi\rangle=|a_f\rangle$. If $V$ is applied before the final projection onto $|a_f\rangle$, the
corresponding weak value becomes
\begin{equation}
\begin{aligned}
{}_{\phi}\langle A\rangle_{\psi}^{\rm weak}
&=
\operatorname{Re}
\left(
\frac{
\langle a_f|VA|\psi\rangle
}{
\langle a_f|V|\psi\rangle
}
\right)
\\
&=
\operatorname{Re}
\left(
\frac{
\langle\phi|A|\psi\rangle
}{
\langle\phi|\psi\rangle
}
\right).
\end{aligned}
\label{eq:rotated_eigenstate_postselection}
\end{equation}
The effective post-selection before the action of $V$
is therefore arbitrary, despite the final projection being performed onto an eigenstate of $A$. In this case, the weak value is no longer constrained to equal $a_f$ and may, in
particular, become anomalous. Correspondingly, the exact finite-strength
conditional pointer average will generally acquire a dependence on $g$.

At first sight, the generalized weak-value expression in Eq.~\eqref{eq:Lund_disturbance_weak_value} appears
to belong to precisely the situation in Eq.~\eqref{eq:rotated_eigenstate_postselection}. The probe is followed by
the measurement-apparatus unitary $U_A^{\text{app}}$, and the final post-selection is
performed only after the intermediate evolution. Nevertheless, we still recover strength invariance as shown in Eq.~\eqref{eq:invariance}. 

The
reason is that its intermediate evolution and post-selection are not
arbitrary. The unitary $U_A^{\text{app}}$ is an entangling measurement interaction between
the system and the measurement apparatus, while the final post-selection is represented by an effect on the joint system–apparatus Hilbert space. For the disturbance
protocol, the post-selection is done on the subspace
$\Pi(b_f)\otimes I$, whereas for the precision protocol it has the form
$I\otimes\Pi(a_f)$. Measurement-strength invariance must therefore be understood as a property of the combined probe interaction, subsequent apparatus interaction, and final post-selection, rather than of the system post-selection alone.

We now investigate the origin of this strength invariance using a more
general measurement model that preserves the structure of the Lund--Wiseman
qubit example while replacing the qubit meters by continuous-variable pointers.
This removes any dependence on properties specific to finite-dimensional
meters and isolates the roles of the entangling interactions and
post-selection. 

Consider two cases in turn: first, \textit{identical entangling observables}
for the system--probe and system--measurement apparatus interactions, corresponding to the precision protocol; and second, \textit{distinct entangling
observables} for the two interactions, corresponding to the
disturbance protocol.

\begin{figure}[t]
    \centering
    \begin{quantikz}[row sep=0.9cm, column sep=0.45cm]
        \lstick{$|\psi\rangle$} & \gate[2]{U_{\rm P}} \gategroup[2,steps=2,style={dashed,rounded corners,fill=blue!20, inner xsep=2pt},background,label style={label position=below,anchor=north,yshift=-0.2cm}]{{Probe}} & \qw & \qw & \qw & & \gate[2]{U_{\rm MA}} \gategroup[2,steps=2,style={dashed,rounded corners,fill=red!20, inner xsep=2pt},background,label style={label position=below,anchor=north,yshift=-0.2cm}]{{Measurement Apparatus}} & &\meter{O_{\rm P}}\\
         \lstick{$|\gamma\rangle$}  & \targ{}  & \meter{x_\text{P}}&\wireoverride{n}& \wireoverride{n} &\wireoverride{n} \lstick{\ket{\theta}} & \targ{}& \meter{x_\text{MA}} 
    \end{quantikz}
   \caption{
The system first interacts with the probe (P) through
$U_{\rm P}=\exp(-ig_p O_{\rm P}\otimes p_{\rm P})$, after which
the conjugate pointer observable $x_{\rm P}$ is read out. The system subsequently interacts
with the measurement apparatus (MA) through
$U_{\rm MA}=\exp(-ig_m O_{\rm MA}\otimes p_{\rm MA})$. The probe statistics are then conditioned either on a final system measurement in the eigenbasis of $O_{\rm P}$ or on an outcome $x_f$ of the MA pointer observable $x_{\rm MA}$, depending
on the post-selection under consideration. The P and MA pointers are initially
described by the position-space wavefunctions $\gamma(x)$ and $\theta(x)$,
respectively.
}
    \label{fig:general_protocol}
\end{figure}
\subsection{Identical Entangling Observables}

Consider an $N$-dimensional system
initially prepared in an arbitrary state $|\psi\rangle$, together with two
independent continuous-variable pointers representing the probe (P)
and the measurement apparatus (MA). Their initial states are described by real, normalized 
position-space wavefunctions $\gamma(x)$ and $\theta(x)$,
respectively,
\begin{equation}
\int dx\,|\gamma(x)|^2
=
\int dx\,|\theta(x)|^2
=
1,
\end{equation}
with vanishing mean positions,
$\langle x_{\rm P}\rangle_\gamma
=
\langle x_{\rm MA}\rangle_\theta
=
0$.

The system first interacts with the probe and subsequently with the
measurement apparatus through the von Neumann interaction unitaries
\begin{equation}
U_{\rm P}
=
e^{-ig_p O_{\rm P}\otimes p_{\rm P}},
\qquad
U_{\rm MA}
=
e^{-ig_m O_{\rm MA}\otimes p_{\rm MA}},
\label{eq:general_unitaries}
\end{equation}
where $O_{\rm P}$ and $O_{\rm MA}$ denote the system observables generating
the two entangling interactions shown in
Fig.~\ref{fig:general_protocol}, while $g_p$ and $g_m$ are the corresponding
coupling strengths. Following the second interaction, the probe pointer
is measured and its statistics are conditioned on a suitable post-selection.

Our objective is to determine under what conditions the post-selected
pointer displacement reconstructed from the probe is exactly
independent of the probing strength $g_p$. As we shall show, the answer is
governed by the post-selected state and the relationship between the entangling observables
$O_{\rm P}$ and $O_{\rm MA}$.

We begin with the simplest case, which mirrors the coupling structure of the
precision protocol considered in the preceding section. We take the
probe and measurement apparatus to interact with the system through the same
observable, $O_{\rm P}
=
O_{\rm MA}
=
A,$ where $A$ is given by the nondegenerate spectral decomposition $A
=
\sum_{j=1}^{N}
a_j |a_j\rangle\langle a_j|$.

Analogously to Eq.~\eqref{eq:finite_strength_conditional_average}, we define the finite-strength conditional value reconstructed
from the probe displacement as
\begin{equation}
{}_{\phi}\langle A\rangle_{\psi}^{(g_p)}
=
\frac{1}{g_p} \int dx\, x\, P_{g_p}(x|\phi)
.
\label{eq:composite_finite_strength_value}
\end{equation}
 with the crucial distinction that the post-selection is performed after the measurement-apparatus interaction. \\

\newtheorem*{lemma}{Lemma}
\begin{lemma}
\label{lem:identical_observables}
Let an $N$-dimensional system initially prepared in
\begin{equation*}
|\psi\rangle
=
\sum_{k=1}^{N}c_k|a_k\rangle
\end{equation*}
interact successively with a probe and a measurement apparatus through
\begin{equation*}
U_{\rm P}
=
e^{-ig_p A\otimes p_{\rm P}},
\qquad
U_{\rm MA}
=
e^{-ig_m A\otimes p_{\rm MA}}.
\end{equation*}
Suppose that, following both interactions, the measurement apparatus is
post-selected on the pointer-position outcome $x_m$, with nonzero
post-selection probability. 

Then the finite-strength conditional value
${}_{x_m}\langle A\rangle_{\psi}^{(g_p)}$ is independent of the probe
strength $g_p$.
\end{lemma}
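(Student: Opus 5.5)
Since both unitaries are diagonal in the eigenbasis $\{|a_k\rangle\}$ of $A$, I will write the final joint state explicitly in that basis and compute the probe's conditional first moment directly. Acting with $U_{\rm P}$ and then $U_{\rm MA}$ on $|\psi\rangle\otimes|\gamma\rangle\otimes|\theta\rangle$ translates the pointer wavefunctions branch by branch, giving
\begin{equation*}
\sum_{k=1}^{N} c_k\,|a_k\rangle\otimes\gamma(x-g_p a_k)\otimes\theta(x_{\rm MA}-g_m a_k)
\end{equation*}
in the position representation of both pointers. Post-selecting the apparatus pointer at $x_m$ and tracing out the system then gives the joint density
\begin{equation*}
P(x,x_m)=\sum_k |c_k|^2\,\gamma(x-g_p a_k)^2\,\theta(x_m-g_m a_k)^2 ,
\end{equation*}
with no cross terms between branches, because the system states $|a_k\rangle$ are orthogonal. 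This is the decisive point: after the second interaction the post-selection acts only on the apparatus, and the system is left unmeasured, so it acts as a which-branch record.

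Integrating over $x$ and using the normalization of $\gamma$ gives
\begin{equation*}
P(x_m)=\sum_k |c_k|^2\,\theta(x_m-g_m a_k)^2 ,
\end{equation*}
which is nonzero by hypothesis. This quantity is independent of $g_p$, since translating the probe pointer does not change its norm.

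The first moment is handled the same way. Using $\int dx\,x\,\gamma(x-g_p a_k)^2 = g_p a_k$, which follows from the vanishing mean of $\gamma$, we get
\begin{equation*}
\int dx\,x\,P(x,x_m)=g_p\sum_k a_k|c_k|^2\theta(x_m-g_m a_k)^2 .
\end{equation*}
Dividing by $g_p P(x_m)$, with $g_p\neq 0$ as assumed throughout, gives
\begin{equation*}
{}_{x_m}\langle A\rangle_{\psi}^{(g_p)}=\frac{\sum_k a_k |c_k|^2\theta(x_m-g_m a_k)^2}{\sum_k |c_k|^2\theta(x_m-g_m a_k)^2},
\end{equation*}
which contains no $g_p$.

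The only step that needs care is justifying that the cross terms vanish, which depends on the system being traced rather than post-selected. I will also check that reality of $\gamma$ is not needed here, although it is harmless. For consistency with the qubit section, I will note that this formula agrees with the generalized weak value of Eq.~\eqref{eq:Lund_precision_weak_value} for $\Pi(a_i)$ weighted by the $a_i$. That agreement follows from the same block-diagonal structure with $\theta\otimes$ projector, evaluated at the outcome $x_m$.
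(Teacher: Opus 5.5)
Your proposal is correct and follows essentially the same route as the paper's proof. Both write the branch-wise translated state, use the orthogonality of the $|a_k\rangle$ to eliminate the cross terms in $P_{g_p}(x,x_m)$, and take the first moment of the probe distribution, so that $g_p$ cancels and the apparatus-weighted average of the $a_k$ remains.
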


\begin{proof}

Following the two interactions, the joint state is
\begin{equation}
|\Psi_2\rangle
=
\sum_{k=1}^{N}
c_k |a_k\rangle
\otimes |\gamma_k\rangle
\otimes |\theta_k\rangle,
\end{equation}
where $
|\gamma_k\rangle
=
e^{-ig_p a_kp_{\rm P}}|\gamma\rangle,$
 and $
|\theta_k\rangle
=
e^{-ig_m a_kp_{\rm MA}}|\theta\rangle .$

Post-selecting the measurement apparatus on the pointer outcome $x_m$
gives
\begin{equation}
\begin{aligned}
&P_{g_p}(x,x_m)
= ||(I \otimes \langle x| \otimes \langle x_m|)|\Psi_2\rangle||^2 \\
&=\sum_{k=1}^{N}
|c_k|^2
|\gamma(x-g_pa_k)|^2
|\theta(x_m-g_ma_k)|^2,
\end{aligned}
\end{equation}
where we used the orthogonality of $|a_k\rangle$. Consequently,
\begin{equation}
P_{g_p}(x_m)
=
\sum_{k=1}^{N}
|c_k|^2
|\theta(x_m-g_ma_k)|^2.
\end{equation}
After substituting in Eq.~\eqref{eq:composite_finite_strength_value}, one obtains
\begin{equation}
{}_{x_m}\langle A\rangle_{\psi}^{(g_p)}
=
\frac{
\displaystyle
\sum_{k=1}^{N}|c_k|^2 a_k
|\theta(x_m-g_ma_k)|^2
}{
\displaystyle
\sum_{k=1}^{N}|c_k|^2
|\theta(x_m-g_ma_k)|^2
},
\end{equation}
which is independent of $g_p$.
\end{proof}

The apparatus strength $g_m$ remains in the conditional weights through
the translated wavepackets $\theta(x_m-g_m a_k)$. In the limit
$g_m\rightarrow 0$, the apparatus carries no information that distinguishes
the eigenvalues of $A$, and therefore
${}_{x_m}\langle A\rangle_{\psi}^{(g_p)}
\longrightarrow
\langle A\rangle_{\psi}$. In the opposite regime, when the apparatus wavepackets associated with
different eigenvalues become well separated, the outcome $x_m$ identifies a
particular eigenvalue and the conditional value approaches that eigenvalue.

This is the continuous-variable analogue of the precision invariance obtained
in Eq.~\eqref{eq:finite_strength_Z_value}: when the probe and the measurement apparatus couple through the same
system observable, conditioning on the apparatus outcome modifies the
relative weights associated with the eigenvalue contributions without
introducing any dependence on the probe strength.

\subsection{Distinct Entangling Observables}

We now consider the more general case where $O_{\rm P} = B$ and $O_{\rm MA} = A$ where $B = \sum_{j = 1}^N  b_j |b_j\rangle \langle b_j| \neq A$ is another nondegenerate observable. We investigate conditions under which probe-strength invariance is preserved.
\newtheorem*{theorem}{Theorem}
\newtheorem*{corollary}{Corollary}

\begin{theorem}

Let an $N$-dimensional system initially prepared in
$|\psi\rangle
=
\sum_{k=1}^{N}c_k|b_k\rangle$
interact successively with a probe and a measurement apparatus through
\begin{equation*}
U_{\rm P}
=
e^{-ig_p B\otimes p_{\rm P}},
\qquad
U_{\rm MA}
=
e^{-ig_m A\otimes p_{\rm MA}}.
\end{equation*}
Suppose that, following both interactions, the system is post-selected onto the eigenstate $| b_f\rangle$ and that the corresponding post-selection probability is nonzero. 

Then the finite-strength conditional value ${}_{b_f}\langle B\rangle_{\psi}^{(g_p)}$ is
independent of $g_p$ when $N  = 2$ and the eigenbases of $A$ and $B$ are mutually unbiased (MUB). 
\end{theorem}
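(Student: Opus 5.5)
The plan is to follow the structure of the proof of the Lemma: compute the joint distribution $P_{g_p}(x,b_f)$ exactly, and show that in the case $N=2$ with mutually unbiased eigenbases all interference terms between different probe branches drop out. Once they do, the probe pointer only reweights the eigenvalue contributions, exactly as in the identical-observable case.

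\emph{Step 1: the joint state.} After the probe interaction the joint state is $\sum_k c_k|b_k\rangle\otimes|\gamma_k\rangle$ with $|\gamma_k\rangle=e^{-ig_p b_k p_{\rm P}}|\gamma\rangle$. Before applying $U_{\rm MA}$, I expand each $|b_k\rangle=\sum_j\langle a_j|b_k\rangle|a_j\rangle$. The apparatus interaction then produces $|\theta_j\rangle=e^{-ig_m a_j p_{\rm MA}}|\theta\rangle$. Projecting the system onto $\langle b_f|$ leaves the unnormalized probe--apparatus state
\begin{equation*}
\sum_k c_k\,|\gamma_k\rangle\otimes|\Theta_{fk}\rangle,\qquad
|\Theta_{fk}\rangle=\sum_j \langle b_f|a_j\rangle\langle a_j|b_k\rangle\,|\theta_j\rangle .
\end{equation*}
The vectors $|\Theta_{fk}\rangle$ depend only on $g_m$ and $\theta$, not on $g_p$.

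\emph{Step 2: the joint distribution.} I trace out the apparatus pointer by integrating over its position. This gives
\begin{equation*}
P_{g_p}(x,b_f)=\sum_{k,l}c_k\bar c_l\,\gamma(x-g_pb_k)\gamma(x-g_pb_l)\,\langle\Theta_{fl}|\Theta_{fk}\rangle .
\end{equation*}
The off-diagonal terms ($k\neq l$) are the ones that would ordinarily generate the nonlinear $g_p$-dependence of the conditional average. They contribute overlaps of displaced wavepackets such as $\int dx\,x\,\gamma(x-g_pb_k)\gamma(x-g_pb_l)$, which are not linear in $g_p$.

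\emph{Step 3 (main obstacle): the cross terms vanish.} The key is to show $\langle\Theta_{f1}|\Theta_{f2}\rangle=0$ for $N=2$ and MUB. Mutual unbiasedness gives $|\langle a_j|b_k\rangle|^2=1/2$. Orthonormality of $\{|b_k\rangle\}$ forces the coefficients to take a fixed form: for $k=f$ the coefficients $\langle b_f|a_j\rangle\langle a_j|b_f\rangle$ are both $1/2$, while for $k\neq f$ the two coefficients are $\pm\tfrac12 e^{i\varphi}$ with opposite signs, because they must sum to $\langle b_f|b_k\rangle=0$. Hence $|\Theta_{ff}\rangle=\tfrac12(|\theta_1\rangle+|\theta_2\rangle)$ and $|\Theta_{fk}\rangle=\tfrac12e^{i\varphi}(|\theta_1\rangle-|\theta_2\rangle)$, so that
\begin{equation*}
\langle\Theta_{ff}|\Theta_{fk}\rangle=\tfrac14 e^{i\varphi}\big(\langle\theta_2|\theta_1\rangle-\langle\theta_1|\theta_2\rangle\big)=\tfrac{i}{2}e^{i\varphi}\,{\rm Im}\langle\theta_2|\theta_1\rangle .
\end{equation*}
Since $\theta(x)$ is real, $\langle\theta_1|\theta_2\rangle=\int dx\,\theta(x-g_ma_1)\theta(x-g_ma_2)$ is real, and the overlap vanishes. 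I expect this step to need the most care, in particular tracking the phase conventions of the eigenvectors and using both the reality of $\theta$ and the restriction $N=2$. For $N>2$ the coefficient pattern no longer forces the cross terms to cancel.

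\emph{Step 4: conclusion.} With only diagonal terms surviving,
\begin{equation*}
P_{g_p}(x,b_f)=\sum_k|c_k|^2 w_k\,|\gamma(x-g_pb_k)|^2,\qquad w_k=\|\Theta_{fk}\|^2 ,
\end{equation*}
and therefore $P_{g_p}(b_f)=\sum_k|c_k|^2w_k$. Using $\int dx\,x\,|\gamma(x-g_pb_k)|^2=g_pb_k$, which follows from the zero mean of $\gamma$, Eq.~\eqref{eq:composite_finite_strength_value} gives
\begin{equation*}
{}_{b_f}\langle B\rangle_{\psi}^{(g_p)}=\frac{\sum_k|c_k|^2 b_k w_k}{\sum_k|c_k|^2w_k},
\end{equation*}
which is independent of $g_p$. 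Here $w_f=\tfrac12(1+\langle\theta_1|\theta_2\rangle)$ and $w_{k\neq f}=\tfrac12(1-\langle\theta_1|\theta_2\rangle)$.
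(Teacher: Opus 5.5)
Your proposal is correct and follows the paper's own proof. Both arguments reduce the problem to the vanishing of the off-diagonal overlaps $\langle\chi_{fk'}|\chi_{fk}\rangle$ of the post-selected apparatus states, and both use mutual unbiasedness together with the reality of $\theta$ to kill these overlaps for $N=2$, which leaves a $g_p$-independent weighted average with weights $(1\pm\mu)/2$. The only difference is cosmetic: you obtain the coefficient pattern directly from unbiasedness and $\langle b_f|b_k\rangle=0$, whereas the paper uses its $(\beta,\phi)$ parametrization and the explicit $\sin(4\beta)(1-\mu)$ formula; your route handles both post-selections $|b_\pm\rangle$ at once.
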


\begin{proof}

In the $N$-dimensional case, after the $U_{\rm P}$ and $U_{\rm MA}$ interactions, the state is

\begin{equation}
|\Psi_2\rangle
=
\sum_{k,r}
c_k
\langle a_r|b_k\rangle
|a_r\rangle \otimes
|\gamma_k\rangle \otimes
|\theta_r\rangle ,
\end{equation}
with $
|\theta_r\rangle
=
e^{-ig_m a_r p_{\rm MA}}|\theta\rangle$ and $|\gamma_k\rangle =
e^{-ig_p b_k p_{\rm P}}|\gamma\rangle.$
Now post-select the system onto $|b_f\rangle$. The unnormalized post-selected probe-apparatus state is
\begin{equation}
|\Phi_f\rangle
=
(\langle b_f| \otimes I \otimes I)|\Psi_2\rangle
=
\sum_{k=1}^{N}
c_k |\gamma_k\rangle \otimes |\chi_{f k}\rangle ,
\end{equation}
where
$|\chi_{f k}\rangle
=
\sum_{r=1}^{N}
\langle b_f|a_r\rangle
\langle a_r|b_k\rangle
|\theta_r\rangle .$
Hence the post-selection probability is
\begin{equation}
P_{g_p}(b_f)
=
\langle \Phi_f|\Phi_f\rangle
=
\sum_{k,k'=1}^{N}
c_k c_{k'}^*
G_{k'k}
K^{(f)}_{k'k},
\end{equation}
where 
$G_{k'k}
=
\langle \gamma_{k'}|\gamma_k\rangle
$ and 
$K^{(f)}_{k'k}
=
\langle \chi_{f k'}|\chi_{f k}\rangle .$

Similarly, the numerator of the finite-strength conditional value is
\begin{equation}
\begin{aligned}
&\frac{1}{g_p} \int dx\, x\, P_{g_p}(x,b_f) = \frac{1}{g_p}
\langle \Phi_f|x_{\rm P}|\Phi_f\rangle\\
&=
\frac{1}{g_p}
\sum_{k,k'=1}^{N}
c_k c_{k'}^*
\langle \gamma_{k'}|x_{\rm P}|\gamma_k\rangle
K^{(f)}_{k'k}.
\end{aligned}
\end{equation}
 Since $G_{kk}=\langle\gamma_k|\gamma_k\rangle=1$ and $\langle\gamma_{k}|x_{\rm P}|\gamma_k\rangle = g_p b_k$, we observe that if we impose the sufficient condition 
 \begin{equation}   
K^{(f)}_{k'k}=0 \qquad
\text{for all } k'\neq k
\label{eq:condition}
 \end{equation}
then the conditional probe average becomes
\begin{equation}
{}_{b_f}\langle B\rangle_{\psi}^{(g_p)}
=
\frac{
\sum_{k=1}^{N}
|c_k|^2 b_k K^{(f)}_{kk}
}{
\sum_{k=1}^{N}
|c_k|^2 K^{(f)}_{kk}
}, \label{eq:cond_avg_B}
\end{equation}
which is independent of $g_p$.

It remains to show that the condition from Eq.~\eqref{eq:condition} holds for $N=2$ when the $A$- and $B$-bases are mutually unbiased.
Let the eigenstates of $B$ be parametrized relative to the $A$ eigenbasis as
\begin{equation}
|b_\pm\rangle
=
\cos\beta\,|a_\pm\rangle
\pm e^{\pm i\phi}\sin\beta\,|a_\mp\rangle,
\end{equation}
where $\beta\in[0,\pi/2]$ and $\phi\in[0,2\pi)$. The $A$-interaction generates the apparatus states
$|\theta_\pm\rangle=e^{-ig_m a_\pm p_{\rm MA}}|\theta\rangle$.
Since the initial pointer wavefunction $\theta(x)$ is real, their overlap
$\langle\theta_+|\theta_-\rangle$ is also real.

After post-selecting onto
$|b_+\rangle$, the off-diagonal element is
\begin{equation}
K^{(+)}_{-+}
=
\langle\chi_{+-}|\chi_{++}\rangle
=
-\frac{1}{4}e^{i\phi}\sin(4\beta)
\left(1-\langle\theta_+|\theta_-\rangle\right).
\end{equation}
For $\beta=\pi/4$, corresponding to mutually unbiased eigenbases, the off-diagonal elements vanish,
\begin{equation}
K^{(+)}_{-+}
=
K^{(+)}_{+-}
=
0.
\end{equation}

The same cancellation occurs for $\beta=0$, where the $A$ and
$B$ eigenbases coincide. Thus, probe-strength invariance also holds for system
post-selection in the identical-basis limit, complementing the
measurement-apparatus post-selection considered in the previous Lemma.

Returning to the mutually unbiased case, $\beta=\pi/4$, the corresponding
diagonal elements reduce to
\begin{equation}
K^{(+)}_{++}
=
\frac{1+\mu}{2},
\qquad
K^{(+)}_{--}
=
\frac{1-\mu}{2},
\qquad
\mu\equiv
\langle\theta_+|\theta_-\rangle .
\end{equation}

Substitution into Eq.~\eqref{eq:cond_avg_B} gives
\begin{equation}
{}_{b_+}\!\langle B\rangle_{\psi}^{(g_p)}
=
\frac{
|c_+|^2 b_+(1+\mu)
+
|c_-|^2 b_-(1-\mu)
}{
|c_+|^2(1+\mu)
+
|c_-|^2(1-\mu)
}.
\end{equation}
\end{proof}
The result depends on the apparatus strength through $\mu$, but is
exactly independent of the probe coupling $g_p$. It is straightforward to check that
post-selection on $|b_-\rangle$ also yields invariance under $g_p$.
Thus, for $N=2$, mutual unbiasedness of the $A$- and $B$-bases is
sufficient to eliminate the probe-coherence terms and hence to
guarantee exact measurement-strength invariance.

This is precisely the structure encountered in the disturbance protocol of the
Lund--Wiseman qubit model, where the probe and measurement apparatus couple
through the mutually unbiased observables $X$ and $Z$, respectively. However,
this cancellation relies on the special geometry of a two-dimensional system.
For $N>2$, mutual unbiasedness does not in general enforce the orthogonality of
the post-selected apparatus states $|\chi_{fk}\rangle$ required by
Eq.~\eqref{eq:condition}, nor does it by itself guarantee probe-strength
invariance. An explicit qutrit counterexample with mutually unbiased
$A$- and $B$-eigenbases is given in Appendix~\ref{app:qutrit_counterexample}.

Importantly, however, $N=2$ and mutual unbiasedness are not necessary conditions for probe-strength invariance in a larger Hilbert space.
\begin{corollary}
Let the system have even dimension $N=2m$, and suppose that the Hilbert space
admits a common invariant decomposition
\begin{equation}
\mathcal H_{2m}
=
\bigoplus_{q=1}^{m}\mathcal H_q^{(2)}
\end{equation}
for the observables $A$ and $B$. If, within each two-dimensional subspace
$\mathcal H_q^{(2)}$, the eigenbases of $A$ and $B$ are mutually unbiased,
then, under the assumptions of the preceding theorem, the finite-strength
conditional value
${}_{b_f}\langle B\rangle_{\psi}^{(g_p)}$
is independent of the probe strength $g_p$ for any post-selection onto a
$B$ eigenstate $|b_f\rangle$ with nonzero probability.
\end{corollary}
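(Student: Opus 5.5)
The plan is to reduce the corollary to the sufficient condition of Eq.~\eqref{eq:condition} established in the proof of the preceding theorem. There it was shown, for arbitrary $N$, that if the post-selected apparatus states satisfy $K^{(f)}_{k'k}=\langle\chi_{fk'}|\chi_{fk}\rangle=0$ for all $k'\neq k$, then the conditional probe average takes the form of Eq.~\eqref{eq:cond_avg_B}, which is free of $g_p$. So it suffices to verify this orthogonality for every $B$ eigenstate $|b_f\rangle$ under the block assumption.

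First I will fix the structure implied by the common invariant decomposition. Since $A$ and $B$ are nondegenerate and each leaves every $\mathcal H_q^{(2)}$ invariant, their eigenvectors can be chosen inside the blocks. Each block then contains exactly two $A$ eigenvectors $|a_{q,\pm}\rangle$ and two $B$ eigenvectors $|b_{q,\pm}\rangle$. Consequently $\langle a_r|b_k\rangle=0$ whenever $a_r$ and $b_k$ lie in different blocks.

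Now take $|b_f\rangle\in\mathcal H_{q_f}^{(2)}$. Then
\begin{equation*}
|\chi_{fk}\rangle=\sum_r\langle b_f|a_r\rangle\langle a_r|b_k\rangle|\theta_r\rangle .
\end{equation*}
The factor $\langle b_f|a_r\rangle$ restricts the sum to $a_r$ in block $q_f$. The factor $\langle a_r|b_k\rangle$ then forces $|\chi_{fk}\rangle=0$ unless $b_k$ also lies in block $q_f$. This gives two cases for a pair $k\neq k'$:
\begin{itemize}
\item If at least one of $b_k,b_{k'}$ lies outside $q_f$, the corresponding vector vanishes and $K^{(f)}_{k'k}=0$ trivially.
\item If both lie in $q_f$, they are the two $B$ eigenvectors $|b_{q_f,\pm}\rangle$ of that block.
\end{itemize}

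For the second case, the relevant $|\chi\rangle$'s involve only the two apparatus states $|\theta_{q_f,\pm}\rangle$. The computation is then literally the two-dimensional one from the theorem's proof. I parametrize $|b_{q_f,\pm}\rangle$ relative to $|a_{q_f,\pm}\rangle$ with angle $\beta$ and phase $\phi$, using the reality of $\langle\theta_{q_f,+}|\theta_{q_f,-}\rangle$. The off-diagonal element is proportional to $\sin(4\beta)(1-\mu_{q_f})$, which vanishes at the mutually unbiased value $\beta=\pi/4$. The same holds for post-selection on either $b_{q_f,+}$ or $b_{q_f,-}$, the latter being noted after the theorem.

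Substituting into Eq.~\eqref{eq:cond_avg_B} gives $g_p$-independence. Only the block-$q_f$ terms survive, yielding a weighted average of $b_{q_f,\pm}$ with weights $|c_{q_f,\pm}|^2(1\pm\mu_{q_f})/2$. The nonzero post-selection probability guarantees the denominator is nonzero.

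The main obstacle I expect is the bookkeeping in the first step. I must make sure that a common invariant decomposition together with nondegeneracy genuinely forces each block to contain exactly one pair of eigenvectors of each observable. This needs a short spectral argument that each block's restriction is diagonalizable within the block. I must also check that the apparatus shifts $a_r$ may differ across blocks without affecting anything, since only the states $|\theta_r\rangle$ of one block ever enter a nonzero inner product.
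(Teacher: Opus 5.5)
Your proposal is correct and follows essentially the same route as the paper: you reduce to the sufficient condition $K^{(f)}_{k'k}=0$ for $k'\neq k$, use block invariance to get $|\chi_{fk}\rangle=0$ whenever $|b_k\rangle$ lies outside the sector of $|b_f\rangle$, and then invoke the two-dimensional mutually unbiased computation (with real $\mu$) inside that sector. Your spectral remark, that nondegeneracy plus Hermiticity forces each invariant block to be spanned by exactly two eigenvectors of $A$ and two of $B$ with distinct blocks orthogonal, makes explicit a step the paper asserts without comment.
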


\begin{proof}
Let $|b_f\rangle\in\mathcal H_q^{(2)}$. Since $A$ and $B$ leave each
$\mathcal H_q^{(2)}$ invariant,
\begin{equation}
\langle b_f|a_r\rangle\langle a_r|b_k\rangle=0
\end{equation}
whenever $|b_k\rangle\notin\mathcal H_q^{(2)}$. Hence
$|\chi_{fk}\rangle=0$ for all $|b_k\rangle$ outside the sector containing
$|b_f\rangle$.

The post-selected state therefore receives contributions only from the two
$B$ eigenstates in $\mathcal H_q^{(2)}$. Within this sector, the $A$- and
$B$-eigenbases are mutually unbiased, so the preceding theorem gives
\begin{equation}
K^{(f)}_{k'k}=0,
\qquad k'\neq k .
\end{equation}
Thus the sufficient condition in Eq.~\eqref{eq:condition} holds, and
Eq.~\eqref{eq:cond_avg_B} is independent of $g_p$.
\end{proof}

The theorem and corollary therefore distinguish between genuinely
higher-dimensional dynamics and higher-dimensional systems that decompose
into effective two-dimensional sectors. Mutual unbiasedness guarantees the
cancellation for a qubit, but does not in general do so for $N>2$.
Nevertheless, the same mechanism survives in higher dimensions whenever the
relevant dynamics and post-selection are confined to invariant
two-dimensional subspaces. Thus, probe-strength invariance is controlled not
by Hilbert-space dimension alone, but by the structure of the observables and
the post-selections explored by the protocol.

\section{Conclusion}

We have shown that the weak-measurement limit is not always necessary for the operational reconstruction of weak values. Within the qubit model of the Lund--Wiseman measurement--disturbance protocol, the finite-strength conditional values reconstructed from the probe are exactly independent of the probe strength and coincide with their weak values.

Our analysis identifies the origin of this invariance in the combined structure of the probe interaction, the subsequent measurement-apparatus interaction, and the final post-selection. When both interactions are generated by the same observable, probe-strength invariance holds for the post-selections considered here in arbitrary system dimension. When the interactions are generated by distinct observables, mutual unbiasedness of their eigenbases is sufficient to enforce the required cancellation for a two-dimensional system. This accounts for the otherwise surprising invariance of the Lund--Wiseman disturbance protocol, where the probe and measurement apparatus couple through the mutually unbiased observables $X$ and $Z$.

Mutual unbiasedness alone does not generally guarantee strength invariance, although the phenomenon persists in larger Hilbert spaces containing invariant two-dimensional sectors. A natural next step is to determine whether higher-dimensional systems admit other mechanisms for invariance and to classify the measurement interactions and post-selections for which it occurs.

Moreover, Eq.~\eqref{eq:cond_avg_B} shows that, under the cancellation
responsible for probe-strength invariance, the conditional value reduces to
a convex combination of the eigenvalues of $B$ and is therefore confined to
its spectral range. Anomalous weak values, by contrast, lie outside this
range. Thus, the mechanism producing exact strength invariance identified
here simultaneously excludes anomalous weak values. Whether this connection
between strength invariance and the absence of anomalous weak values extends
beyond the present class of protocols remains an open question.

Continuous-variable systems provide a natural setting in which to investigate
this question further. Weak-valued probability distributions have, for
example, been employed to characterize momentum transfer in which-way
measurements~\cite{garretson2004uncertainty}, where anomalous
quasiprobabilities can arise. Determining whether any analogue of the
finite-strength invariance identified here can coexist with such anomalous
behaviour would therefore provide a useful test of how general the connection
between strength invariance and non-anomalous conditional values is.

\begin{acknowledgments}
We thank Giuseppe Di Pietra, Lodovico Scarpa, Nicetu Tibau Vidal, and Aditya Iyer for valuable discussions and feedback on this work. C.A. acknowledges support from the Clarendon Fund. V.V. thanks the Gordon and Betty Moore Foundation as well as the Conjecture Institute for their financial support. GPT-5.6 Sol was used during the preparation of this manuscript solely to assist with grammar, wording, and sentence structure. All conceptual content, arguments, and results were developed by the authors. AI-assisted edits were reviewed and verified by the authors, who take full responsibility for the manuscript.
\end{acknowledgments}

\newpage

\appendix
\begin{widetext}
\section{Finite-Strength Disturbance}
\label{app:finite_strength_disturbance}
For the disturbance protocol, the probe couples to the system in the $X$
basis, whereas the measurement apparatus couples in the $Z$ basis. The probe
is subsequently read out in the $Z$ basis, yielding $Z_p$, while the final
system readout is performed in the $X$ basis, yielding $X_s$.

Writing 
\begin{equation}
|\psi\rangle=c_{+}|+\rangle+c_{-}|-\rangle,
\qquad
c_{\pm}=\frac{\alpha\pm\beta}{\sqrt{2}},
\end{equation}
the finite-strength conditional value is
\begin{equation}
{}_{X_s}\langle X\rangle_{\psi}^{(s_\gamma)}
=
\frac{P(Z_p=+1|X_s)-P(Z_p=-1|X_s)}{s_\gamma}.
\end{equation}
As in the precision calculation, the required probabilities are
\begin{equation}
P(Z_p,X_s)
=
\left\|
\langle X_s|\otimes\langle Z_p|\otimes I
\left( U^{\rm app}_{\rm Z}U^{\rm probe}_{\rm X} \right)
|\psi\rangle \otimes |\gamma\rangle \otimes|\theta\rangle
\right\|^2 .
\end{equation}
For $X_s=+1$, direct evaluation gives
\begin{align}
P(Z_p=+1,X_s=+1)
&=
\frac{1}{2}\left[
\gamma^2|c_+|^2(1+\sin(2\theta))
+\bar{\gamma}^2|c_-|^2(1-\sin(2\theta))
\right],\\
P(Z_p=-1,X_s=+1)
&=
\frac{1}{2}\left[
\bar{\gamma}^2|c_+|^2(1+\sin(2\theta))
+\gamma^2|c_-|^2(1-\sin(2\theta))
\right],
\end{align}
and hence
\begin{equation}
P(X_s=+1)
=
\frac{1}{2}\left[
|c_+|^2(1+\sin(2\theta))
+
|c_-|^2(1-\sin(2\theta))
\right].
\end{equation}

Therefore,
\begin{equation}
{}_{X_s=+1}\langle X\rangle_{\psi}^{(s_\gamma)}
=
\frac{
|c_+|^2(1+\sin(2\theta))-|c_-|^2(1-\sin(2\theta))
}{
|c_+|^2(1+\sin(2\theta))+|c_-|^2(1-\sin(2\theta))
}.
\label{eq:app_Xplus}
\end{equation}

Similarly, for $X_s=-1$,
\begin{equation}
{}_{X_s=-1}\langle X\rangle_{\psi}^{(s_\gamma)}
=
\frac{
|c_+|^2(1-\sin(2\theta))-|c_-|^2(1+\sin(2\theta))
}{
|c_+|^2(1-\sin(2\theta))+|c_-|^2(1+\sin(2\theta))
}.
\label{eq:app_Xminus}
\end{equation}

Thus, for either post-selection, the probe strength $s_\gamma$ cancels
exactly:
\begin{equation}
{}_{X_s}\langle X\rangle_{\psi}^{(s_\gamma)}
=
{}_{X_s}\langle X\rangle_{\psi}^{\rm weak}.
\end{equation}
Consequently, the weak-valued probabilities entering the reconstruction
of $\eta(X)$ are also independent of the probe strength.
\newpage 
\section{Qutrit Counterexample}
\label{app:qutrit_counterexample}

We give an explicit $N=3$ counterexample showing that mutual unbiasedness
alone does not guarantee probe-strength invariance. Let
\begin{equation}
A=\sum_{r=0}^{2}a_r|a_r\rangle\langle a_r|,
\qquad
(a_0,a_1,a_2)=(-1,0,1),
\end{equation}
and define the $B$ eigenbasis by
\begin{equation}
|b_k\rangle
=
\frac{1}{\sqrt{3}}
\sum_{r=0}^{2}\omega^{rk}|a_r\rangle,
\qquad
\omega=e^{2\pi i/3},
\end{equation}
with eigenvalues $(b_0,b_1,b_2)=(-1,0,1)$. It is straightforward to verify that the eigenbasis is orthonormal by using the identity $\sum_{r = 0}^{N-1} e^{2\pi i r(m-n)/N} =N \delta_{m,n}$. Moreover
$|\langle a_r|b_k\rangle|^2=1/3$, so the two bases are mutually unbiased.

Take a real centered Gaussian apparatus pointer and post-select on
$|b_0\rangle$.
The post-selected apparatus overlaps satisfy
\begin{equation}
K^{(0)}_{00}=\frac{3+4\mu+2\mu^4}{9},
\qquad
K^{(0)}_{11}=\frac{3-2\mu-\mu^4}{9},
\qquad
\operatorname{Re}K^{(0)}_{01}
=
\frac{\mu^4-\mu}{18}
\end{equation}
with $\mu=e^{-g_m^2/(8\sigma_m^2)}$. Hence $K^{(0)}_{01}\neq0$ for any nonzero finite apparatus coupling.

Now choose
\begin{equation}
|\psi\rangle
=
\frac{|b_0\rangle+|b_1\rangle}{\sqrt{2}},
\end{equation}
and a real centered Gaussian probe pointer. Its relevant overlap is
\begin{equation}
G(g_p)
=
\langle\gamma_0|\gamma_1\rangle
=
e^{-g_p^2/(8\sigma_p^2)}.
\end{equation}
The conditional value then becomes
\begin{equation}
{}_{b_0}\langle B\rangle_{\psi}^{(g_p)}
=
-
\frac{
K^{(0)}_{00}
+
G(g_p)\operatorname{Re}K^{(0)}_{01}
}{
K^{(0)}_{00}
+
K^{(0)}_{11}
+
2G(g_p)\operatorname{Re}K^{(0)}_{01}
}.
\label{eq:qutrit_counterexample}
\end{equation}
Since $\operatorname{Re}K^{(0)}_{01}\neq0$, the conditional value depends
explicitly on $g_p$. Thus, already for $N=3$, mutually unbiased $A$- and
$B$-eigenbases do not in general guarantee probe-strength invariance.

\end{widetext}

\bibliography{bibs}

\end{document}